%%%%%%%%%%%%%%%%%%%%%%%%%%%%%%%%%%%%%%%%%%%%%%%%%%%%%%%%%%%%%%%%%%%%%%%%%%%%%%%%
%2345678901234567890123456789012345678901234567890123456789012345678901234567890
%        1         2         3         4         5         6         7         8

\documentclass[letterpaper, 10 pt, conference]{ieeeconf}  % Comment this line out
                                                          % if you need a4paper
%\documentclass[a4paper, 10pt, conference]{ieeeconf}      % Use this line for a4
                                                          % paper
\makeatletter
\newcommand*\titleheader[1]{\gdef\@titleheader{#1}}
\AtBeginDocument{%
  \let\st@red@title\@title
  \def\@title{%
    \bgroup\normalfont\footnotesize\vskip-0.9in\@titleheader\par\egroup
    \vskip0.6in\st@red@title}
}
\makeatother

\IEEEoverridecommandlockouts                              % This command is only
                                                          % needed if you want to
                                                          % use the \thanks command
\overrideIEEEmargins
% See the \addtolength command later in the file to balance the column lengths
% on the last page of the document

\usepackage{cite}
\usepackage{url}

\usepackage{enumitem}
\usepackage{amsmath}
\usepackage{amsfonts}

\usepackage{amsthm}
\usepackage{amssymb}
\usepackage{mathtools}
\usepackage{dsfont}
\usepackage{graphicx}
\usepackage[font=it]{caption}
\usepackage{subcaption}
\usepackage{xcolor}
\usepackage{algorithm, algorithmic}
\usepackage{comment}
\usepackage[english]{babel}
\usepackage[utf8]{inputenc}
\usepackage{fancyhdr}
\usepackage{textcomp}

\newtheorem{theorem}{Theorem}[section]

\newtheorem{lemma}{Lemma}[section]

\newtheorem{remark}{Remark}[section]
\newtheorem{assumption}{Assumption}[section]

\DeclareMathOperator*{\argmin}{arg\,min}

\renewcommand{\Re}{\mathbb{R}}

% \usepackage{sectsty}
% \sectionfont{\large}
% \subsectionfont{\normalsize}
% \subsubsectionfont{\large}
% \paragraphfont{\large}

% The following packages can be found on http:\\www.ctan.org
%\usepackage{graphics} % for pdf, bitmapped graphics files
%\usepackage{epsfig} % for postscript graphics files
%\usepackage{mathptmx} % assumes new font selection scheme installed
%\usepackage{times} % assumes new font selection scheme installed
%\usepackage{amsmath} % assumes amsmath package installed
%\usepackage{amssymb}  % assumes amsmath package installed

\title{\LARGE \bf
Hybrid Heavy-Ball Systems:\\Reset Methods for Optimization with Uncertainty
}

%\author{ \parbox{3 in}{\centering Huibert Kwakernaak*
%         \thanks{*Use the $\backslash$thanks command to put information here}\\
%         Faculty of Electrical Engineering, Mathematics and Computer Science\\
%         University of Twente\\
%         7500 AE Enschede, The Netherlands\\
%         {\tt\small h.kwakernaak@autsubmit.com}}
%         \hspace*{ 0.5 in}
%         \parbox{3 in}{ \centering Pradeep Misra**
%         \thanks{**The footnote marks may be inserted manually}\\
%        Department of Electrical Engineering \\
%         Wright State University\\
%         Dayton, OH 45435, USA\\
%         {\tt\small pmisra@cs.wright.edu}}
%}
\titleheader{An abridged version of this article will be published in the Proceedings of the 2021 American Control Conference (\textcopyright 2021 AACC).}

\author{Justin H. Le and Andrew R. Teel% <-this % stops a space
\thanks{}% <-this % stops a space
\thanks{Justin H. Le and Andrew R. Teel are with the ECE Department, Univer-
sity of California, Santa Barbara, CA 93106-9560. Email: teel@ucsb.edu, justinle@ucsb.edu. This work has been supported by the Air Force Office of Scientific Research through the grant FA9550-18-1-0246.}
}

\begin{document}

\maketitle
\thispagestyle{empty}
\pagestyle{empty}

%%%%%%%%%%%%%%%%%%%%%%%%%%%%%%%%%%%%%%%%%%%%%%%%%%%%%%%%%%%%%%%%%%%%%%%%%%%%%%%%
\begin{abstract}
Momentum methods for convex optimization often rely on precise choices of algorithmic parameters, based on knowledge of problem parameters, in order to achieve fast convergence, as well as to prevent oscillations that could severely restrict applications of these algorithms to cyber-physical systems. To address these issues, we propose two dynamical systems, named the Hybrid Heavy-Ball System and Hybrid-inspired Heavy-Ball System, which employ a feedback mechanism for driving the momentum state toward zero whenever it points in undesired directions. We describe the relationship between the proposed systems and their discrete-time counterparts, deriving conditions based on linear matrix inequalities for ensuring exponential rates in both continuous time and discrete time. We provide numerical LMI results to illustrate the effects of our reset mechanisms on convergence rates in a setting that simulates uncertainty of problem parameters. Finally, we numerically demonstrate the efficiency and avoidance of oscillations of the proposed systems when solving both strongly convex and non-strongly convex problems.
\end{abstract}

%%%%%%%%%%%%%%%%%%%%%%%%%%%%%%%%%%%%%%%%%%%%%%%%%%%%%%%%%%%%%%%%%%%%%%%%%%%%%%%%

\section{Introduction}
\label{sec:intro}

Convex optimization problems are becoming increasingly challenging as they find broader applications in cyber-physical systems, where they often bring stringent requirements on the efficiency and robustness of the algorithms that are used to solve them. In theory and in practice, the efficient convergence of iterative algorithms for convex optimization can be achieved through the use of momentum, as in the sense of Nesterov's method \cite[Ch. 2]{nesterov-2014}. In particular, for certain strongly convex problems, one theoretically significant aspect of Nesterov's method is that it achieves an efficient degradation of its convergence rate as the conditioning of the problem tends to infinity \cite[Sec. 4.5]{lessard-2016}. However, in order to maintain such a property, the algorithmic parameters, namely the stepsize and momentum parameter, must be selected according to a specific formula dependent on the problem parameters. When such parameters are unavailable, Nesterov's method, as well as other momentum methods, not only lose their theoretical guarantees of efficiency but also suffer from oscillations in their trajectories that hinder their convergence in practice \cite{o'donoghue-2015}. Moreover, such oscillations can make momentum methods unreliable for applications in feedback-based optimization \cite{hauswirth-2020} \cite{ortmann-2020}, where the algorithm is used in feedback interconnection with a physical system and can thereby jeopardize the safety of that system when experiencing oscillations.

Oscillations in momentum methods have been addressed with reset mechanisms \cite{fercoq-2019} \cite{roulet-2017} \cite{fercoq-2019b}, many of which involve scheduling the times at which the resets occur without using any feedback information about the state of the algorithm but instead by using knowledge or estimates of certain problem parameters, which are often uncertain or entirely unknown in practice. In contrast, the adaptive reset mechanisms of \cite{o'donoghue-2015} offer conditions that can be computed straightforwardly at each iteration using state information to determine the instants of reset. Although the theoretical guarantees of efficiency under the adaptive mechanism of \cite{o'donoghue-2015} are difficult to extend beyond quadratic objectives, the proposed reset conditions raise analogies with reset systems in control theory, especially those that have benefited from the use of hybrid systems theory \cite{prieur-2018}, suggesting opportunities to analyze and design novel optimization algorithms with resets within a hybrid systems framework, as done in \cite{teel-2019}.

Our work builds on the themes of \cite{teel-2019} and \cite{prieur-2018} in order to determine whether or not (and to what extent) a feedback-based reset mechanism can either improve or degrade the efficiency and robustness to uncertainty of momentum methods. First, we introduce a hybrid dynamical system, referred to as the Hybrid Heavy-Ball Method (HHBM), that incorporates momentum in its flows and uses an adaptive mechanism to reset the momentum to zero whenever it points away from the negative gradient of the objective function. We also introduce a differential inclusion, referred to as the Hybrid-inspired Heavy-Ball Method (HiHBM), that uses a similar mechanism to adjust the amount of damping of the momentum. These two systems serve as vehicles for investigating the effects of reset mechanisms in existing momentum methods and for deriving novel momentum methods that are useful for their robustness to uncertainty. Toward these goals, we first derive linear matrix inequality (LMI) conditions for HHBM to achieve exponential convergence in the continuous-time sense for the case of strongly convex quadratic objectives, in order to relate our proposed ideas to an existing result on linear reset systems \cite{nesic-2008}. Using insights gained from the proof, we formulate analogous LMI conditions, assuming only strong convexity, for a general class of discrete-time systems whose special cases include discretizations of HHBM, HiHBM, and several other dynamical systems of interest in optimization, including the Hybrid Hamiltonian Algorithm of \cite{teel-2019}. Our discrete-time analysis generalizes known LMI conditions in the literature on momentum methods, extending prior results from time-invariant systems to systems that feature switching behaviors based on state-feedback information.

We show numerically that our LMI conditions can be used to compute feasible exponential rates for the proposed family of discrete-time systems, revealing that the proposed reset laws derived from HHBM and HiHBM can mitigate the deterioration of rates caused by uncertainty about problem parameters. The computations suggest that the reset laws do not quite preserve the rate guarantees of Nesterov's method when assuming perfect knowledge of problem parameters; on the other hand, we demonstrate that the discrete-time analogues of HHBM and HiHBM show promise in achieving fast convergence and reduction of oscillations without requiring precise tuning of algorithmic parameters, in contrast to Nesterov's method. Finally, we compare the performance of the considered methods for a non-strongly convex objective to show that our proposed algorithms exhibit similar advantages over existing methods as in the strongly convex case, even when existing methods are tuned extensively by experiment.

\section{Problem Setting and Algorithmic Overview}
Consider the problem
\begin{align}
\label{problem:optimization}
    \min_{q \in \Re^n} \; \phi(q)
\end{align}
under the following assumption:
\begin{assumption}
\label{objective_assumptions}
The objective $\phi: \Re^n \to \Re$ 
\begin{enumerate}
    \item attains $\phi^* \coloneqq \min_{q \in \mathbb R^n} \phi(q) > -\infty$,
    \item is continuously differentiable,
    \item has compact sublevel sets,
    \item has an $L$-Lipschitz\footnote{A function is said to be $L$-Lipschitz if it is Lipschitz continuous with Lipschitz constant $L$.} gradient $\nabla\phi$,
    \item is invex \cite{ben-israel-1986}, i.e., satisfies $$\mathcal{Q}^* \coloneqq \{q \in \Re^{n}: \phi(q) = \phi^*\} = \{q \in \Re^{n}: \nabla\phi(q) = 0\}.$$
\end{enumerate}
\end{assumption}

For the purpose of computing a solution to \eqref{problem:optimization}, we propose algorithms that are extensions of the following system, referred to as a heavy-ball system with parameter $K \in \Re_{>0}$, denoted HB($K$) and with state denoted $x \coloneqq (q, p)$:
\begin{align}
\label{eq:hb}
    \dot x &= \left[\begin{array}{cc}
        p \\
        -Kp - \nabla\phi(q)
    \end{array}\right].
\end{align}

We propose the Hybrid Heavy-Ball Method (HHBM) with parameter $K \in \Re_{>0}$, denoted HHB($K$), which is a hybrid system \cite[Ch. 2]{goebel-2012} with state $z \coloneqq (x, \tau)$, with $x \coloneqq (q, p)$, with flow map and jump map
\begin{subequations}
\label{eq:hhb}
\begin{align}
\label{eq:map_hhb}
    \begin{array}{ll}
        \dot x = f_0(x) 
    \coloneqq \left[\begin{array}{cc}
        p  \\
        -Kp - \nabla\phi(q) 
    \end{array}\right], & \dot\tau = 1, \\
        x^+ = g_0(x) \coloneqq \left[\begin{array}{cc}
        q  \\
        0 
    \end{array}\right], & \tau^+ = 0,
    \end{array}
\end{align}
with flow set $\mathcal{F}$ given by
\begin{align}
    \mathcal{F}_0 &\coloneqq \{(q, p) \in \Re^{2n}: \; \langle \nabla\phi(q), p\rangle \leq 0\}, \nonumber \\
    \mathcal{F} &\coloneqq \left(\Re^{2n} \times [0, \underline{T}]\right) \cup \left(\mathcal{F}_0 \times [\underline{T}, \infty)\right), \quad \underline{T} \in \Re_{>0}, \label{eq:flow_set_hhb}
\end{align}
and jump set $\mathcal{J}$ given by
\begin{align}
    \mathcal{J}_0 &\coloneqq \{(q, p) \in \mathbb R^{2n}: \; \langle \nabla\phi(q), p\rangle \geq 0\}, \nonumber \\
    \mathcal{J} &\coloneqq \mathcal{J}_0 \times [\underline{T}, \infty). \label{eq:jump_set_hhb}
\end{align}
\end{subequations}
The parameter $\underline{T}$ provides temporal regularization in the sense of \cite{nesic-2008} to avoid purely discrete-time solutions. Without the regularization, the fact that $\mathcal{F}_0 \cap \mathcal{J}_0 \neq \emptyset$ would allow for solutions \cite[Def 2.6]{goebel-2012} that jump indefinitely. Note that, for sufficiently small $\underline{T}$, the special case HHB($0$) is closely related to the Hybrid Hamiltonian Algorithm \cite{teel-2019}.

% Meanwhile, the jump equation of \eqref{eq:hhb} is designed according to the concept that, whenever the momentum state $p$ points in an undesirable direction, we consider this condition to indicate that the accumulated momentum is hindering convergence (in the sense described in Sec. \ref{sec:intro}), and it should therefore return to zero before attempting to accumulate momentum again. Other jump behaviors may be more beneficial to performance but are beyond the scope of the current work. 

We also propose a differential inclusion referred to as the Hybrid-inspired Heavy-Ball Method (HiHBM) with parameters $\{\underline{K}, \overline{K}\} \in \Re^2$ satisfying $0 < \underline{K} \leq \overline{K}$, denoted HiHB($\underline{K}$, $\overline{K}$), with state $x \coloneqq (q, p)$ and dynamics given by
\begin{subequations}
\label{eq:hihb}
\begin{align}
    \dot x &\in F(x) \coloneqq \left[\begin{array}{cc}
        p \\
        -\kappa(x)p - \nabla\phi(q)
    \end{array}\right], \label{eq:map_hihb} \\
    \kappa(x) &\coloneqq \kappa(x; \underline{K}, \overline{K}) \nonumber \\
    &\coloneqq \begin{dcases}
        \overline{K} & \text{if } \langle\nabla\phi(q), p\rangle > 0, \\
        \underline{K} & \text{if } \langle\nabla\phi(q), p\rangle < 0, \\
        {\left[\underline{K}, \overline{K}\right]} & \text{if } \langle\nabla\phi(q), p\rangle = 0.
    \end{dcases} \label{eq:switch}
\end{align}
\end{subequations}
For any $K \in \Re_{>0}$, HiHB($K$, $K$) is equivalent to HB($K$).

Under Assumption \eqref{objective_assumptions}, it can be shown that $\mathcal{Q}^*$ is uniformly globally asymptotically stable \cite[Def. 3.6]{goebel-2012} for the system \eqref{eq:hhb}, with the proof being similar to that of \cite[Thm. 1]{teel-2019}. An analogous result can be obtained for \eqref{eq:hihb}. However, we omit the details due to a lack of space, considering that the current work instead focuses on studying exponential rates of convergence.

\section{Continuous-Time Exponential Rates}
A differentiable function $\phi$ is said to be $\mu$-strongly convex if, for some $\mu \in \Re_{>0}$, it holds that,  for all $x, y \in \Re^{n}$,
\begin{align*}
    \phi(y) \geq \phi(x) + \nabla\phi(x)^T (y - x) + \frac{\mu}{2}|y - x|^2.
\end{align*}
We use the following property of strongly convex functions with Lipschitz gradient, which is established in \cite[Eq. 3.27]{fazlyab-2018} as a special case of \cite[Lemma 6]{lessard-2016}.
\begin{lemma}
\label{thm:iqc}
Let $\phi: \Re^{n} \mapsto \Re$ be $\mu$-strongly convex with $L$-Lipschitz gradient. Defining
\begin{align}
\label{eq:iqc_matrix}
    M_{\mu, L} \coloneqq \left[\begin{array}{cc}
        -\frac{\mu L}{\mu + L}I_n & \frac{1}{2}I_n \\
        \frac{1}{2}I_n & -\frac{1}{\mu + L}I_n
    \end{array}\right],
\end{align}
it holds that, for all $v, w \in \Re^n$,
\begin{align}
\label{eq:iqc}
\left[\begin{array}{c}
        v - w \\
        \nabla\phi(v) - \nabla\phi(w)
    \end{array}\right]^T
M_{\mu, L}
\left[\begin{array}{c}
        v - w \\
        \nabla\phi(v) - \nabla\phi(w)
    \end{array}\right] \geq 0.
\end{align}
\end{lemma}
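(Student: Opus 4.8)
The plan is to expand the quadratic form in \eqref{eq:iqc} into a scalar inequality and then reduce that inequality to the standard co-coercivity property of convex functions. Writing $u \coloneqq v - w$ and $r \coloneqq \nabla\phi(v) - \nabla\phi(w)$, multiplying \eqref{eq:iqc} through by $\mu + L > 0$ shows that the claim is equivalent to
\begin{align*}
    (\mu + L)\langle u, r\rangle \geq \mu L\,|u|^2 + |r|^2.
\end{align*}
This is the inequality I would establish directly.

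The key step is a change of function that converts strong convexity into ordinary convexity. Define $g \coloneqq \phi - \tfrac{\mu}{2}|\cdot|^2$. Since $\phi$ is $\mu$-strongly convex, $g$ is convex; since $\nabla\phi$ is $L$-Lipschitz and $\nabla g = \nabla\phi - \mu\,\mathrm{id}$, the gradient $\nabla g$ is $(L - \mu)$-Lipschitz. Thus $g$ is a convex function whose gradient has Lipschitz constant $\beta \coloneqq L - \mu$, which is exactly the setting in which co-coercivity of the gradient holds.

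Next I would invoke, and to remain self-contained also sketch, the co-coercivity bound
\begin{align*}
    \langle \nabla g(v) - \nabla g(w),\, v - w\rangle \geq \tfrac{1}{\beta}\,|\nabla g(v) - \nabla g(w)|^2.
\end{align*}
A short route is to apply the quadratic upper bound implied by the $\beta$-Lipschitz gradient to the shifted map $x \mapsto g(x) - \langle \nabla g(w), x\rangle$, which is convex and minimized at $w$; minimizing the upper bound over a gradient step gives $g(v) \geq g(w) + \langle \nabla g(w), v - w\rangle + \tfrac{1}{2\beta}|\nabla g(v) - \nabla g(w)|^2$, and adding this to the same inequality with $v$ and $w$ interchanged yields co-coercivity. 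Substituting $\nabla g(v) - \nabla g(w) = r - \mu u$ into the co-coercivity bound and clearing the denominator $\beta = L - \mu$ then reduces, after collecting the terms in $|u|^2$, $\langle u, r\rangle$, and $|r|^2$, to precisely $(\mu + L)\langle u, r\rangle \geq \mu L|u|^2 + |r|^2$.

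I expect the main obstacle to be establishing the co-coercivity inequality rather than the concluding algebra, which is routine once the substitution is made. A minor point needing care is the boundary case $\mu = L$, where $\beta = 0$ and $g$ has constant gradient: there the substitution forces $r = \mu u$, both sides of the target coincide, and the result follows either directly or by letting $\mu \uparrow L$.
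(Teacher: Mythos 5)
Your derivation is correct, but note that the paper does not actually prove Lemma~\ref{thm:iqc}; it imports the statement by citation from \cite[Eq. 3.27]{fazlyab-2018}, itself a special case of the interpolation/IQC result \cite[Lemma 6]{lessard-2016}. What you give instead is the classical self-contained proof (essentially \cite[Thm.~2.1.12]{nesterov-2014}): expanding the quadratic form and scaling by $\mu+L$ does yield the equivalent inequality $(\mu+L)\langle u,r\rangle \geq \mu L|u|^2 + |r|^2$, the substitution $\nabla g(v)-\nabla g(w)=r-\mu u$ into co-coercivity of $\nabla g$ with constant $\beta=L-\mu$ does collapse to exactly that inequality after clearing $\beta$ (I checked the algebra), and your treatment of the degenerate case $\mu=L$ is right. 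Your route buys elementarity and independence from the IQC machinery; the cited route buys a formulation that extends to multi-point interpolation conditions, which is the form the paper reuses in its LMIs. One step needs a more careful justification: the claim that $\nabla g$ is $(L-\mu)$-Lipschitz does not follow from ``$\nabla\phi$ is $L$-Lipschitz and $\nabla g = \nabla\phi - \mu\,\mathrm{id}$'' alone --- the triangle inequality only gives $L+\mu$. The correct argument is that $\frac{L}{2}|\cdot|^2-\phi$ is convex (its gradient is monotone by Cauchy--Schwarz whenever $\nabla\phi$ is $L$-Lipschitz), and this function equals $\frac{L-\mu}{2}|\cdot|^2-g$; together with convexity of $g$ this yields precisely the quadratic upper bound with constant $\beta=L-\mu$ that your co-coercivity sketch requires, so the gap is repaired without altering the structure of your proof.
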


We now establish a result regarding exponential convergence of HHBM.
\begin{theorem}
\label{thm:lmi_hhbm}
For $K \in \Re_{> 0}$, let
\begin{equation}
\begin{aligned}
\label{eq:iqc_plant}
    A &\coloneqq \left[\begin{array}{cc}
        0 & I_n \\
        0 & -K I_n
    \end{array}\right], &
    A_R &\coloneqq \left[\begin{array}{cc}
        I_n & 0 \\
        0 & 0
    \end{array}\right], \\
    B &\coloneqq \left[\begin{array}{c}
        0 \\
        -I_n
    \end{array}\right], &
    C &\coloneqq \left[\begin{array}{c}
        I_n \\
        0
    \end{array}\right]^T.
\end{aligned}    
\end{equation}
Consider a $\mu$-strongly convex quadratic function $\phi$ satisfying Assumption~\eqref{objective_assumptions} with unique global minimum denoted $q^*$, and let $x^* \coloneqq (q^*, 0)$. Defining $M_{\mu, L}$ by \eqref{eq:iqc_matrix}, suppose that there exist $\alpha, \varepsilon, \sigma_{\phi}, \sigma_1, \sigma_2 \in \Re_{>0}$ and a positive definite $P \in \Re^{2n \times 2n}$ such that the matrices
\begin{align}
    M_\mathcal{F} &\coloneqq \left[\begin{array}{cc}
        PA + A^T P + 2\alpha P & PB \\
        B^T P & 0
    \end{array}\right], \nonumber \\
    M_\mathcal{J} &\coloneqq \left[\begin{array}{cc}
        A_R^T P A_R - P & 0 \\
        0 & 0
    \end{array}\right], \nonumber \\    
    M_{\phi} &\coloneqq \left[\begin{array}{cc}
        C^T & 0 \\
        0 & I_n
    \end{array}\right]
    M_{\mu, L}
    \left[\begin{array}{cc}
        C & 0 \\
        0 & I_n
    \end{array}\right], \label{eq:M_phi} \\
    M_{\varepsilon} &\coloneqq \left[\begin{array}{ccc}
        \varepsilon I_n & 0 & 0 \\ [\smallskipamount]
        0 & \varepsilon I_n & -\frac{1}{2}I_n \\ [\smallskipamount]
        0 & -\frac{1}{2}I_n & 0
    \end{array}\right], \quad M_0 \coloneqq M_{(\varepsilon=0)} \nonumber
\end{align}
satisfy
\begin{subequations}
\label{eq:lmi_hhb}
\begin{align}
    M_\mathcal{F} + \sigma_{\phi} M_{\phi} + \sigma_1 M_{\varepsilon} &\leq 0, \label{eq:lmi_hhb_flow} \\
    M_\mathcal{J} - \sigma_2 M_{0} &\leq 0. \label{eq:lmi_hhb_jump}
\end{align}
\end{subequations}
Then, there exist $\underline{T}^*, c \in \Re_{>0}$ such that, for all $\underline{T} \in (0, \underline{T}^*)$, the solutions of HHB($K$) satisfy
\begin{align}
    % |x(t, j) - x^*| \leq c|x_0|\exp\left(-\alpha t\right)
    |x(t, j) - x^*| \leq c|x_0 - x^*|\exp\left(-\frac{\alpha}{\textnormal{cond}(P)}t\right)
\end{align}
for each initial condition $x(0, 0) \coloneqq x_0 \in \mathcal{F} \cup \mathcal{J}$ and for all $(t, j) \in \textnormal{dom }x$. Here, $\text{cond}(P) \coloneqq \lambda_{\text{max}}(P)/\lambda_{\text{min}}(P)$, where $\lambda_{\text{max}}(P)$ and $\lambda_{\text{min}}(P)$ denote the largest and smallest eigenvalues of $P$, respectively.
\end{theorem}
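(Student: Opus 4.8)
The plan is to construct a single Lyapunov-like function of the form $V(x) = (x-x^*)^T P (x-x^*)$ and show that, along flows, it decays exponentially at rate $2\alpha/\mathrm{cond}(P)$, while at jumps it does not increase. Let me think about how each piece of the hypothesis feeds into this.

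Let me think about the flow condition. During flows, $\dot x = f_0(x) = A(x-x^*) + B(\nabla\phi(q)-\nabla\phi(q^*))$, using $\nabla\phi(q^*)=0$ and that $\phi$ is quadratic so $A$ captures the linear part. Then $\dot V = (x-x^*)^T(PA+A^TP)(x-x^*) + 2(x-x^*)^TPB(\nabla\phi(q)-\nabla\phi(q^*))$. The goal is $\dot V \le -2\alpha V$, i.e. $\dot V + 2\alpha V \le 0$. Writing the quadratic form in the stacked vector $\xi := ((x-x^*)^T, (\nabla\phi(q)-\nabla\phi(q^*))^T)^T$, the term $\dot V + 2\alpha V$ is exactly $\xi^T M_{\mathcal F}\xi$. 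Now I invoke two auxiliary facts. The strong-convexity IQC from Lemma~\ref{thm:iqc}, applied with $v=q$, $w=q^*$, gives $\xi^T M_\phi \xi \ge 0$ after rewriting $q-q^* = C(x-x^*)$; multiplying by $\sigma_\phi>0$ and adding is an S-procedure step. The term $M_\varepsilon$ encodes the flow-set constraint $\langle\nabla\phi(q),p\rangle\le 0$: since $\nabla\phi(q^*)=0$ and $p^*=0$, this reads $\xi^T M_0 \xi \le 0$ (the off-diagonal $-\tfrac12 I_n$ blocks pair $p$ with $\nabla\phi(q)-\nabla\phi(q^*)$), and the $\varepsilon I_n$ terms are a strengthening that will be absorbed by the temporal regularization. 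Thus \eqref{eq:lmi_hhb_flow} yields $\xi^T M_{\mathcal F}\xi \le -\sigma_\phi \xi^T M_\phi\xi - \sigma_1\xi^T M_\varepsilon\xi \le 0$ on the flow set, giving $\dot V \le -2\alpha V$.

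For the jump condition, at a jump $x^+ = g_0(x)$ resets $p$ to zero while leaving $q$ fixed; since $x^* = (q^*,0)$ one checks $g_0(x)-x^* = A_R(x-x^*)$. Hence $V(x^+)-V(x) = (x-x^*)^T(A_R^TPA_R - P)(x-x^*) = \xi^T M_{\mathcal J}\xi$. Applying the jump-set constraint $\langle\nabla\phi(q),p\rangle\ge 0$ as another S-procedure multiplier via $M_0$, condition \eqref{eq:lmi_hhb_jump} gives $\xi^T M_{\mathcal J}\xi \le \sigma_2\xi^T M_0\xi \le 0$, so $V$ does not increase across jumps. Combining: $V$ decays at rate $2\alpha$ in flow time $t$ and is nonincreasing at jumps, so $V(x(t,j)) \le V(x_0)\exp(-2\alpha t)$; converting to the norm via $\lambda_{\min}(P)|x-x^*|^2 \le V \le \lambda_{\max}(P)|x-x^*|^2$ produces the stated bound with $c = \sqrt{\mathrm{cond}(P)}$ and rate $\alpha/\mathrm{cond}(P)$ (the factor-of-two discrepancy being reconciled by the square root when passing from $V$ to $|x-x^*|$).

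The main obstacle I anticipate is the role of the temporal regularization parameter $\underline T$ and the strict $\varepsilon>0$ strengthening. The flow and jump sets overlap on $\mathcal F_0\cap\mathcal J_0 = \{\langle\nabla\phi(q),p\rangle=0\}$, and without care the S-procedure arguments only certify $\dot V\le 0$ and $\Delta V\le 0$ nonstrictly on this boundary. The purpose of $M_\varepsilon$ (versus $M_0$) is to buy a strict margin during flows that dominates any transient growth accrued over the minimum dwell-time interval $[0,\underline T]$, during which the state is allowed to flow outside $\mathcal F_0$. I therefore expect the delicate step to be showing existence of $\underline T^* > 0$ such that for $\underline T < \underline T^*$ the Lyapunov decrease survives the regularized portion of the flow set $\Re^{2n}\times[0,\underline T]$ — this is where I would mirror the dwell-time / temporal-regularization estimates of \cite{nesic-2008}, bounding the worst-case growth of $V$ over an interval of length at most $\underline T$ and choosing $\underline T^*$ small enough that the net rate remains governed by $\alpha$. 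Everything else reduces to the two S-procedure LMI manipulations above together with standard norm equivalence for $P$.
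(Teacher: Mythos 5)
Your proposal follows essentially the same route as the paper's proof: the quadratic Lyapunov function $V(x)=(x-x^*)^TP(x-x^*)$, the S-procedure use of the flow/jump-set constraints via $M_\varepsilon$ and $M_0$ and of Lemma~\ref{thm:iqc} via $M_\phi$, and deferral of the dwell-time argument to the temporal-regularization result of \cite{nesic-2008}, which is precisely what the paper invokes (its Thm.~2.1, with the $\varepsilon$-term serving to certify the decrease on an inflated flow set $\mathcal{F}_\varepsilon$). The only slip is the rate bookkeeping at the end: the direct computation $V(x(t,j))\leq V(x_0)e^{-2\alpha t}$ together with norm equivalence yields exponent $\alpha$ (no square root produces $\alpha/\mathrm{cond}(P)$); the $\mathrm{cond}(P)$ in the stated rate instead arises from the cruder formula $a_3/(2a_2)=2\alpha\lambda_{\min}(P)/(2\lambda_{\max}(P))$ of the cited theorem, so your version would, if carried through, prove a bound at least as strong as required.
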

\begin{proof}
Define
\begin{align*}
    \tilde{x} &\coloneqq x - x^*, & \tilde{q} &\coloneqq q - q^*, \\
    e &\coloneqq (\tilde{x}, u), & u &\coloneqq \nabla\phi(q).
\end{align*}
We aim to apply \cite[Thm. 2]{nesic-2008} to the system whose state is $\tilde{x}$. To do so, we first observe that, under the given assumptions on $\phi$, there exists a positive definite matrix $\tilde{Q}$ such that $\nabla\phi(q) = \tilde{Q}(q - q^*) = \tilde{Q}\tilde{q}$, so that the flow map and jump map for the system with state $\tilde{x}$ can be written as
\begin{subequations}
\begin{align}
    \tilde{A} &\coloneqq \left[\begin{array}{cc}
        0 & I_n \\
        -\tilde{Q} & -K I_n
    \end{array}\right] \label{eq:closed_loop_matrix}, \\
    \tilde{f}_0(\tilde{x}) &= \tilde{A}\tilde{x}, \label{eq:flow_map_tilde} \\
    \tilde{g}_0(\tilde{x}) &= A_R \tilde{x}, \label{eq:jump_map_tilde}
\end{align}
\end{subequations}
with $A_R$ given by \eqref{eq:iqc_plant}. Then, letting $\tilde\phi(\tilde{q}) \coloneqq (1/2)\left|\tilde{Q}^{1/2}\tilde{q}\right|^2$, the flow set \eqref{eq:flow_set_hhb} and jump set \eqref{eq:jump_set_hhb} can be expressed in terms of $\tilde{x} \coloneqq (\tilde{q}, p)$ by observing that $(q, p) = (\tilde{q} + q^*, p)$ and that
\begin{align}
\label{eq:equal_gradients}
    \nabla_{\tilde{q}}\tilde\phi(\tilde{q}) = \nabla_q\phi(q), \quad \forall q \in \Re^n.
\end{align}
For brevity, we omit the subscript of $\nabla$ herein.

First, note that \cite[Assumption 1]{nesic-2008} holds because $g_0$ in \eqref{eq:jump_map_tilde} satisfies $g_0(\tilde{x}) = A_R \tilde{x} \in \mathcal{F}$. To satisfy the condition \cite[Eq. 20]{nesic-2008}, take $V(\tilde{x}) \coloneqq \tilde{x}^T P \tilde{x}$. Then, to show that \cite[Eq. 22]{nesic-2008} is satisfied, multiply \eqref{eq:lmi_hhb_jump} by $e^T$ and $e$ on the left and right, respectively, and observe that
\begin{align}
    e^T M_0 e = -p^T \nabla\phi(q) \leq 0, \quad \forall (q, p) \in \mathcal{J}. \nonumber
\end{align}
Then, \eqref{eq:equal_gradients} ensures that \cite[Eq. 22]{nesic-2008} holds for all points in the jump set of the system whose state is $\tilde{x}$.

Next, to show that \cite[Eq. 21]{nesic-2008} is satisfied, define
\begin{align}
\label{eq:inflation}
    \mathcal{F}_{\varepsilon} &\coloneqq \{(\tilde{q}, p) \in \Re^{2n}: \; p^T\nabla\tilde\phi(\tilde{q}) - \varepsilon\left(|\tilde{q}|^2 + |p|^2\right) \leq 0\}.
\end{align}
Then, multiply \eqref{eq:lmi_hhb} by $e^T$ and $e$ on the left and right, respectively. From \eqref{eq:equal_gradients}, we have the inequality
\begin{align}
    e^T M_{\varepsilon} e = -p^T \nabla\tilde\phi(\tilde{q}) + \varepsilon |\tilde{x}|^2 \geq 0, \quad \forall \tilde{x} \in \mathcal{F}_{\varepsilon}, \nonumber
\end{align}
which implies that
\begin{align}
    e^T \left(M_\mathcal{F}  + \sigma_{\phi} M_{\phi}\right) e \leq 0, \quad \forall \tilde{x} \in \mathcal{F}_{\varepsilon}. \nonumber
\end{align}
Then, observe that $e^T M_{\phi} e$ is equal to the left-hand side of \eqref{eq:iqc} when $v = q$ and $w = q^*$. Hence, Lemma~\ref{thm:iqc} can be applied to obtain $e^T M_{\phi} e \geq 0$. It follows from the previous inequality that 
\begin{align}
    e^T M_\mathcal{F} e \leq 0, \quad \forall \tilde{x} \in \mathcal{F}_{\varepsilon}. \nonumber
\end{align}
Expanding the left-hand side yields
\begin{align}
    \frac{\partial V}{\partial\tilde{x}} \tilde{A}\tilde{x} \leq -2\alpha V(\tilde{x}) \leq -2\alpha\lambda_{\text{min}}(P)|\tilde{x}|^2, \quad \forall \tilde{x} \in \mathcal{F}_{\varepsilon}. \nonumber
\end{align}
Furthermore, the condition \cite[Eq. 20]{nesic-2008} is satisfied according to $V(\tilde{x}) \leq \lambda_{\text{max}}(P)|\tilde{x}|^2$. Thus, letting $a_2 \coloneqq \lambda_{\text{max}}(P)$ and $a_3 \coloneqq 2\alpha\lambda_{\text{min}}(P)$, \cite[Thm. 2.1]{nesic-2008} ensures that there exists $\underline{T}^* \in \Re_{>0}$ such that, for all $\underline{T} \in (0, \underline{T}^*)$ and for all $x_0 \in \mathcal{F} \cup \mathcal{J}$, the solutions $x$ of HHB$(K)$ are such that $\tilde{x}$ converges to zero with an exponential rate of $a_3/(2a_2) = 2\alpha\lambda_{\text{min}}(P)/(2\lambda_{\text{max}}(P)) = \alpha/\text{cond}(P)$.
\end{proof}

The role of $M_{\varepsilon}$ in Theorem~\ref{thm:lmi_hhbm} motivates the results of the next section, where we take advantage of a similar approach in order to arrive at LMI conditions that are more numerically tractable and interpretable than \eqref{eq:lmi_hhb}. See \ref{sec:numerical_lmi} for more discussion on the feasibility of \eqref{eq:lmi_hhb}.

\section{Discrete-time Exponential Rates}

\subsection{Discrete-time dynamic equations}

Throughout this section, we assume that $\phi$ is a $\mu$-strongly convex function satisfying Assumption~\ref{objective_assumptions}. For certain choices of $\beta$, the following system, with state $x \coloneqq (q, p)$ and parameter $\epsilon \in \Re_{>0}$, can be viewed as a discretization of the systems studied in the previous sections:
\begin{align}
\label{eq:hhbm_dt}
\begin{split}
q_{k+1} &= q_k + \epsilon p_{k+1}, \\
p_{k+1} &= \beta(x_k) p_k - \epsilon \nabla\phi(q_k).
\end{split}
\end{align}
Setting $\beta \equiv 1 - \epsilon K$, the system \eqref{eq:hhbm_dt} is a discretization of HB($K$) defined in \eqref{eq:hb}, which we refer to as Polyak's method. On the other hand, defining $\beta$ to be
\begin{align}
\beta(x_k) &\coloneqq \beta(x_k; \underline{\beta}, \overline{\beta}) \nonumber \\
    &\coloneqq \begin{dcases}
        \overline{\beta} \coloneqq 1 - \epsilon\underline{K} & \text{if } \langle\nabla\phi(q_k), p_k\rangle < 0, \\
        \underline{\beta} \coloneqq 1 - \epsilon\overline{K} & \text{if } \langle\nabla\phi(q_k), p_k\rangle \geq 0, \\
    \end{dcases} \label{eq:beta_hihb}
\end{align}
\eqref{eq:hhbm_dt} is a discretization of HiHB($\underline{K}$, $\overline{K}$) and is referred to as HiHB-Pol with parameters $0 \leq \underline{\beta} \leq \overline{\beta} \leq 1$. For the case of $\underline{\beta} = 0$ and $\overline{\beta} = 1 - \epsilon K$ in \eqref{eq:beta_hihb}, the resulting system in \eqref{eq:hhbm_dt} is a discretization of HHB($K$) and is referred to as HHB-Pol with parameter $\overline{\beta}$. Thus, HHB-Pol is a special case of HiHB-Pol. 

Note that, when discretizing HHB$(K)$, there is no need to account for $\tau$ and $\underline{T}$. To see why, consider the case of $\underline{\beta} = 0$ and $\overline{\beta} = 1 - \epsilon K$ in \eqref{eq:beta_hihb}, and observe that, for any $k$ such that $\nabla\phi(q_k) \neq 0$, and for sufficiently small $\epsilon$, $\langle\nabla\phi(q_{k+1}), p_{k+1}\rangle = -\epsilon\nabla\phi(q_k)^T \nabla\phi(q_{k+1}) \simeq -\epsilon |\nabla\phi(q_k)| < 0$. Thus, $\beta(x_k) = \underline{\beta}$ implies that $\beta(x_{k+1}) = \overline{\beta}$. In other words, assuming sufficiently small $\epsilon$, whenever $x_k$ reaches a ``jump'' state, it immediately returns to a ``flow'' state and remains in ``flow'' states for some iterations thereafter. In this sense, HHB-Pol naturally models the temporally regularized system HHB$(K)$ by having trajectories in which each ``jump'' is followed by a period of ``flow''. 

\begin{remark}
The form \eqref{eq:hhbm_dt} has been referred to as a two-step or multi-step discretization \cite{muehlebach-2020}, which is related to symplectic integration \cite{hairer-2006}. From this viewpoint, other systems of interest in optimization can be obtained from \eqref{eq:hhbm_dt}. Setting $\underline{\beta} = 0$ and $\overline{\beta} = 1$ in \eqref{eq:beta_hihb}, \eqref{eq:hhbm_dt} becomes a discretization of the Hybrid Hamiltonian Algorithm of \cite{teel-2019}. Setting $\beta \equiv 1$, \eqref{eq:hhbm_dt} is a symplectic integration of Hamiltonian flow, i.e., the left-hand variant of \cite[Thm. 3.3]{hairer-2006}. Related systems are found in \cite[Eq. 26]{muehlebach-2020} and \cite[Sec. 2]{muehlebach-2019}.
\end{remark}

The results of the next section will be applicable to two different discretizations of HHBM, one based on Polyak's method and another based on Nesterov's method \cite{nesterov-2014}, which we now describe. These two discretizations will also be possible for HiHBM. First, the system \eqref{eq:hhbm_dt} can be rewritten to resemble Polyak's method in \cite{polyak-1964}:
\begin{subequations}
\label{eq:hhbm_polyak}
\begin{align}
q_{k+1} &= q_k + \epsilon \left[\beta(x_k) p_k - \epsilon \nabla\phi(q_k)\right], \label{eq:hhbm_polyak_q} \\
p_{k+1} &= \frac{q_{k+1} - q_k}{\epsilon}.
\end{align}
\end{subequations}
We have already described above how special cases of this system correspond to discretizations of HHBM and HiHBM.

Next, note that, for strongly convex objectives, the continuous-time limit of Polyak's method is the same differential equation as the continuous-time limit of Nesterov's method \cite{shi-2018}. Hence, we also consider the following system to be a discretization of our proposed hybrid and hybrid-inspired systems:
\begin{subequations}
\label{eq:hhbm_nesterov}
\begin{align}
q_{k+1} &= q_k + \epsilon \left[\beta(x_k) p_k - \epsilon \nabla\phi(q_k + \epsilon\beta(x_k) p_k)\right], \label{eq:hhbm_nesterov_q} \\
p_{k+1} &= \frac{q_{k+1} - q_k}{\epsilon}.
\end{align}
\end{subequations}
For the case $\beta \equiv 1 - \epsilon K$, we simply refer to \eqref{eq:hhbm_nesterov} as Nesterov's method with parameter $K \in \Re_{>0}$. On the other hand, defining $\beta$ as in \eqref{eq:beta_hihb}, the system \eqref{eq:hhbm_nesterov} is a discretization of HiHBM and is referred to as HiHB-Nes. Then, for the special case of $\underline{\beta} = 0$ and $\overline{\beta} = 1 - \epsilon K$ in \eqref{eq:beta_hihb}, the system \eqref{eq:hhbm_nesterov} is a discretization of HHBM, referred to as HHB-Nes.

In subsequent sections, it will be convenient to define the stepsize parameter $h \coloneqq \epsilon^2$, which appears as the coefficient of the gradient in the above discrete-time systems.

\subsection{LMI conditions}

Toward the goal of deriving LMI conditions for exponential convergence, consider the system
\begin{subequations}
\label{eq:lure}
\begin{align}
    x_{k+1} &= \hat{A}x_k + \hat{B}u_k, & y_k &= \hat{C}x_k, \\
    u_k &= \nabla\phi(y_k), & \xi_k &= \hat{E}x_k,
\end{align}
\end{subequations}
having a fixed point $(x^*, u^*, y^*, \xi^*)$ that satisfies $\xi^* = q^* \coloneqq \argmin_{q \in \mathbb R^n} \phi(q)$ and
\begin{subequations}
% \label{eq:lure_fixed}
\begin{align*}
    x^* &= \hat{A}x^* + \hat{B}u^*, & y^* &\coloneqq \hat{C}x^*, \\
    u^* &\coloneqq \nabla\phi(y^*), & \xi^* &= \hat{E}x^*.
\end{align*}
\end{subequations}

We now begin rewriting \eqref{eq:hhbm_polyak_q} and \eqref{eq:hhbm_nesterov_q} in the form \eqref{eq:lure}. For the system  \eqref{eq:hhbm_polyak_q}, we define a set of system matrices in \eqref{eq:lure} for each case of the switching law \eqref{eq:beta_hihb}. Specifically, we set $x_k = (q_{k-1}, q_k)$ and, letting $h \coloneqq \epsilon^2$, we have
\begin{equation}
\begin{aligned}
\label{eq:lure_hb}
    \hat{A} &= \left[\begin{array}{cc}
        0 & I_n \\
        -\overline{\beta} I_n & (\overline{\beta} + 1) I_n
    \end{array}\right], &
    \hat{B} &= \left[\begin{array}{c}
        0 \\
        -h I_n
    \end{array}\right], \\
    \hat{C} &= \left[\begin{array}{c}
        0 \\
        I_n
    \end{array}\right]^T, &
    \hat{E} &= \left[\begin{array}{c}
        0 \\
        I_n
    \end{array}\right]^T.
\end{aligned}
\end{equation}
Then, define $(\hat{A}_R, \hat{B}_R, \hat{C}_R, \hat{E}_R)$ in the same way but with $\overline{\beta}$ replaced by $\underline{\beta}$. (In this case, $\hat{B} = \hat{B}_R$, $\hat{C} = \hat{C}_R$, and $\hat{E} = \hat{E}_R$.) The subscript $R$ indicates that these matrices represent the iterations that correspond to the (continuous-time) instants at which HHBM ``resets'' its $p$-state. 

For the system \eqref{eq:hhbm_nesterov_q}, define
\begin{align}
\label{eq:lure_nesterov}
        (\hat{A}, \hat{B}, \hat{E}) \text{ as in \eqref{eq:lure_hb},} \quad \hat{C} = \left[\begin{array}{c}
        -\overline{\beta} I_n \\
        (\overline{\beta} + 1) I_n
    \end{array}\right]^T.
\end{align}
Then, define $(\hat{A}_R, \hat{B}_R, \hat{C}_R, \hat{E}_R)$ in the same way but with $\overline{\beta}$ replaced by $\underline{\beta}$. (In this case, $\hat{B} = \hat{B}_R$, and $\hat{E} = \hat{E}_R$.)

For convenience, we use the following notation to distinguish between the ``non-reset'' and ``reset'' regions in the state-space of our proposed systems:
\begin{subequations}
\begin{align}
    S = &\{x = (x_1, x_2) \in \Re^{2n} : \nonumber \\ 
    &\quad \langle\nabla\phi(\hat{C}x), \; x_2 - x_1\rangle < 0\}, \label{eq:region_nonreset} \\
    S_R = &\{x = (x_1, x_2) \in \Re^{2n} : \nonumber \\ 
    &\quad \langle\nabla\phi(\hat{C}x), \; x_2 - x_1\rangle \geq 0\}. \label{eq:region_reset}
\end{align}
\end{subequations}
These sets reflect the switching law \eqref{eq:beta_hihb} but with $p_k$ scaled by $\epsilon$ (which does not change the nature of the law because $\epsilon > 0$). Then, combining the system matrices of \eqref{eq:lure_hb} or \eqref{eq:lure_nesterov} with \eqref{eq:lure}, we have a representation that can capture either system \eqref{eq:hhbm_polyak} or system \eqref{eq:hhbm_nesterov}, respectively, making both systems amenable to our LMI-based analysis:
\begin{subequations}
\label{eq:lure_hybrid}
\begin{align}
x_k &\in S \; \Longrightarrow & 
&\begin{dcases}
    x_{k+1} &= \hat{A}x_k + \hat{B}u_k, \\
    y_k &= \hat{C}x_k, \\
    u_k &= \nabla\phi(y_k), \\
    \xi_k &= \hat{E}x_k,
\end{dcases} \\
x_k &\in S_R \; \Longrightarrow & 
&\begin{dcases}
    x_{k+1} &= \hat{A}_Rx_k + \hat{B}_Ru_k, \\
    y_k &= \hat{C}_Rx_k, \\
    u_k &= \nabla\phi(y_k), \\
    \xi_k &= \hat{E}_Rx_k.
\end{dcases} 
\end{align}
\end{subequations}

We now have the ingredients to establish the following.
\begin{theorem}
\label{thm:lmi}
Let $\phi$ be a $\mu$-strongly convex function satisfying Assumption~\ref{objective_assumptions} with minimizer denoted $q^*$. With system matrices $(\hat{A}, \hat{B}, \hat{C}, \hat{E})$ given by either \eqref{eq:lure_hb} or \eqref{eq:lure_nesterov}, define the matrices
\begingroup
\allowdisplaybreaks
\begin{align}
    M_P &\coloneqq \left[\begin{array}{cc}
        \hat{A}^T P\hat{A} - \rho^2 P & \hat{A}^T P\hat{B} \\
        \hat{B}^T P\hat{A} & \hat{B}^T P\hat{B}
    \end{array}\right], \nonumber \\
    \Sigma_1 &\coloneqq \left[\begin{array}{cc}
        \hat{E}\hat{A} - \hat{C} & \hat{E}\hat{B} \\
        0 & I_n
    \end{array}\right], \quad \Sigma_2 \coloneqq \left[\begin{array}{cc}
        \hat{C} - \hat{E} & 0 \\
        0 & I_n
    \end{array}\right], \nonumber \\
    N_1 &\coloneqq \Sigma_1^T
    \left[\begin{array}{cc}
        \frac{L}{2}I_n & \frac{1}{2}I_n \\ [\smallskipamount]
        \frac{1}{2}I_n & 0
    \end{array}\right]
    \Sigma_1, \nonumber \\
    N_2 &\coloneqq \Sigma_2^T
    \left[\begin{array}{cc}
        \frac{-\mu}{2}I_n & \frac{1}{2}I_n \\ [\smallskipamount]
        \frac{1}{2}I_n & 0
    \end{array}\right]
    \Sigma_2, \nonumber \\
    N_3 &\coloneqq \left[\begin{array}{cc}
        \hat{C} & 0 \\
        0 & I_n
    \end{array}\right]^T
    \left[\begin{array}{cc}
        \frac{-\mu}{2}I_n & \frac{1}{2}I_n \\ [\smallskipamount]
        \frac{1}{2}I_n & 0
    \end{array}\right]
    \left[\begin{array}{cc}
        \hat{C} & 0 \\
        0 & I_n
    \end{array}\right], \nonumber \\
    M_1 &\coloneqq N_1 + N_2, \quad M_2 \coloneqq N_1 + N_3, \nonumber \\
    M_3 &= M_{\phi} \text{ as defined by \eqref{eq:M_phi},} \nonumber \\
    M &\coloneqq \left[\begin{array}{ccc}
        0 & 0 & \frac{1}{2} I_n \\ [\smallskipamount]
        0 & 0 & -\frac{1}{2} I_n \\ [\smallskipamount]
        \frac{1}{2} I_n & -\frac{1}{2} I_n & 0
    \end{array}\right]. \label{eq:M}
\end{align}%
\endgroup
Define $(M_{P,R}, M_{1,R}, M_{2,R}, M_{3,R})$ in the same way except with system matrices $(\hat{A}_R, \hat{B}_R, \hat{C}_R, \hat{E}_R)$. Suppose that there exist $a, \lambda, \lambda_R, \sigma, \sigma_R \in \Re_{>0}$, $\rho \in (0, 1]$, and a positive definite $P \in \Re^{2n \times 2n}$ such that
\begin{subequations}
\label{eq:lmi_system}
\begin{align}
    M_P &+ a\rho^2 M_1 + a(1 - \rho^2)M_2 \nonumber \\ 
    &+ \lambda M_3 + \sigma M \leq 0,  \label{eq:lmi} \\
    M_{P,R} &+ a\rho^2 M_{1,R} + a(1 - \rho^2)M_{2,R} \nonumber \\ 
    &+ \lambda_R M_{3,R} - \sigma_R M \leq 0. \label{eq:lmi_R}    
\end{align}
\end{subequations}
Then, there exists $c \in \Re_{>0}$ such that the trajectory of \eqref{eq:lure} satisfies
\begin{align}
    \phi(\xi_k) - \phi(\xi^*) \leq c \rho^{2k} \nonumber
\end{align}
for each initial condition $x_0 \in \Re^{2n}$ and for all $k \in \mathbb{Z}_{\geq 0}$. In particular,
\begin{align*}
    c = \frac{1}{a}\left(a(\phi(\xi_0) - \phi^*) + (x_0 - x^*)^T P (x_0 - x^*)\right).
\end{align*}
\end{theorem}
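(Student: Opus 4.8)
The plan is to exhibit a single Lyapunov function that merges the objective gap with a quadratic energy term and to show it contracts by the factor $\rho^2$ at every step, handling the two regions $S$ and $S_R$ separately via an S-procedure. Concretely, I would set
\[
    V_k \coloneqq a\big(\phi(\xi_k) - \phi^*\big) + (x_k - x^*)^\top P (x_k - x^*),
\]
and aim to prove the one-step inequality $V_{k+1} \leq \rho^2 V_k$. Because $P$ is positive definite, $(x_k - x^*)^\top P (x_k - x^*) \geq 0$, so iterating the contraction gives $a(\phi(\xi_k)-\phi^*) \leq V_k \leq \rho^{2k} V_0$; dividing by $a$ reproduces the stated bound together with the exact constant $c = V_0/a = \tfrac1a\big(a(\phi(\xi_0)-\phi^*) + (x_0-x^*)^\top P(x_0-x^*)\big)$. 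Thus the whole theorem reduces to the per-step decrease of $V_k$.

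To establish $V_{k+1} - \rho^2 V_k \le 0$, I would stack the deviations into $v_k \coloneqq \big[(x_k - x^*)^\top,\,(u_k - u^*)^\top\big]^\top$, observing that the fixed point satisfies $u^* = \nabla\phi(q^*) = 0$. Since $x_{k+1} - x^* = \hat A (x_k - x^*) + \hat B(u_k - u^*)$, the energy difference $(x_{k+1}-x^*)^\top P(x_{k+1}-x^*) - \rho^2 (x_k-x^*)^\top P(x_k-x^*)$ equals exactly $v_k^\top M_P v_k$. For the objective part I would invoke the algebraic identity
\[
    a(\phi(\xi_{k+1}) - \phi^*) - a\rho^2(\phi(\xi_k) - \phi^*) = a\rho^2\big(\phi(\xi_{k+1}) - \phi(\xi_k)\big) + a(1-\rho^2)\big(\phi(\xi_{k+1}) - \phi^*\big),
\]
which is precisely the split mirrored by the $a\rho^2 M_1 + a(1-\rho^2)M_2$ terms in \eqref{eq:lmi}. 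The task then reduces to the two quadratic-form bounds $\phi(\xi_{k+1}) - \phi(\xi_k) \leq v_k^\top M_1 v_k$ and $\phi(\xi_{k+1}) - \phi^* \leq v_k^\top M_2 v_k$; both require $\rho \in (0,1]$ so that the multipliers $a\rho^2$ and $a(1-\rho^2)$ are nonnegative and preserve the inequality directions.

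Verifying these two bounds for \emph{both} the Polyak matrices \eqref{eq:lure_hb} and the Nesterov matrices \eqref{eq:lure_nesterov} is the step I expect to be the main obstacle, since the gradient is evaluated at the intermediate point $y_k = \hat C x_k$, which coincides with the tracked output $\xi_k = \hat E x_k$ in the Polyak case but not the Nesterov case, forcing the bounds to telescope through $y_k$. I would check that $\Sigma_1 v_k = \big[(\xi_{k+1} - y_k)^\top,\, \nabla\phi(y_k)^\top\big]^\top$, whence $v_k^\top N_1 v_k = \nabla\phi(y_k)^\top(\xi_{k+1}-y_k) + \tfrac{L}{2}|\xi_{k+1}-y_k|^2 \geq \phi(\xi_{k+1}) - \phi(y_k)$ by the descent lemma (smoothness). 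Likewise $\Sigma_2 v_k = \big[(y_k - \xi_k)^\top,\,\nabla\phi(y_k)^\top\big]^\top$ gives $v_k^\top N_2 v_k \geq \phi(y_k) - \phi(\xi_k)$ by strong convexity, so $v_k^\top M_1 v_k = v_k^\top(N_1+N_2)v_k$ telescopes to $\phi(\xi_{k+1}) - \phi(\xi_k)$; substituting the comparison point $q^*$ for $\xi_k$ (this is the role of $N_3$) yields $v_k^\top M_2 v_k \geq \phi(\xi_{k+1}) - \phi^*$ in the same manner.

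Finally, I would close the argument with the S-procedure on the two regions. A direct computation shows $v_k^\top M v_k = -\langle \nabla\phi(y_k),\, x_{k,2}-x_{k,1}\rangle$ for $x_k = (x_{k,1},x_{k,2})$, so by \eqref{eq:region_nonreset}--\eqref{eq:region_reset} one has $v_k^\top M v_k \geq 0$ on $\overline{S}$ and $v_k^\top M v_k \leq 0$ on $S_R$; moreover Lemma~\ref{thm:iqc}, applied with $v = y_k$ and $w = q^*$, supplies the free inequality $v_k^\top M_3 v_k \geq 0$ (and likewise $v_k^\top M_{3,R}v_k \geq 0$). For $x_k \in S$, multiplying \eqref{eq:lmi} on the left and right by $v_k$ and discarding the nonnegative contributions $\lambda v_k^\top M_3 v_k$ and $\sigma v_k^\top M v_k$ gives $v_k^\top\big[M_P + a\rho^2 M_1 + a(1-\rho^2)M_2\big]v_k \leq 0$, which combined with the two bounds above yields $V_{k+1} - \rho^2 V_k \leq 0$. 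For $x_k \in S_R$ the identical computation with \eqref{eq:lmi_R} and the reset matrices applies, now exploiting $-\sigma_R v_k^\top M v_k \geq 0$ on $S_R$. Since a single $P$ and $a$ serve both LMIs, $V_k$ is a common Lyapunov function across the switching, so the per-step contraction holds along every trajectory of \eqref{eq:lure}, completing the proof.
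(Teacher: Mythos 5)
Your proposal is correct and follows essentially the same route as the paper: the same Lyapunov function $a(\phi(\xi)-\phi^*)+(x-x^*)^T P(x-x^*)$ (you state the contraction $V_{k+1}\le\rho^2 V_k$ directly, the paper equivalently uses the rescaled sequence $\rho^{-2k}V$), the same S-procedure dropping the $\sigma M$ and $\lambda M_3$ terms via the sign conditions on $S$, $S_R$ and Lemma~\ref{thm:iqc}, and the same $M_1/M_2$ splitting of the objective gap. The only difference is that you derive the bounds $\phi(\xi_{k+1})-\phi(\xi_k)\le e_k^T M_1 e_k$ and $\phi(\xi_{k+1})-\phi^*\le e_k^T M_2 e_k$ from scratch via the descent lemma and strong convexity telescoped through $y_k$, where the paper simply cites \cite[Lemma 4.1]{fazlyab-2018}; your derivation is a correct, self-contained proof of that cited lemma.
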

\begin{proof}
With $\xi \coloneqq \hat{E}x$, we will show that the function $V_k$ given by
\begin{align}
\label{eq:V_k}
\begin{split}
    P_k &\coloneqq \rho^{-2k}P, \quad a_k \coloneqq a\rho^{-2k}, \\
    V_k(x) &\coloneqq \rho^{-2k}\left(a(\phi(\xi) - \phi^*) + (x - x^*)^T P (x - x^*)\right)
\end{split}
\end{align}
satisfies
\begin{align}
\label{eq:V_nonincrease}
    V_{k+1}(x_{k+1}) \leq V_k(x_k),
\end{align}
for every iteration $k$ of \eqref{eq:lure_hybrid}, from which it follows that
\begin{align*}
    a_k(\phi(\xi_k) - \phi^*) \leq V_k(x_k) \leq V_0(x_0),
\end{align*}
and therefore,
\begin{align*}
    \phi(\xi_k) - \phi^* \leq \left(\frac{V_0(x_0)}{a}\right)\rho^{2k}.
\end{align*}
First, with $u$ defined by \eqref{eq:lure} and letting 
\begin{align}
\label{eq:e}
    e_k \coloneqq [x_k - x^* \quad u_k - u^*]^T,
\end{align}
we note that $M$ in \eqref{eq:M} satisfies
\begin{subequations}
\begin{align}
    e_k^T M e_k &\geq 0, \quad \forall k \text{ s.t. } x_k \in S, \label{eq:M_S} \\
    e_k^T M e_k &\leq 0, \quad \forall k \text{ s.t. } x_k \in S_R. \label{eq:M_S_R}
\end{align}
\end{subequations}
Now, consider any $k$ such that $x_k \in S$. Multiply \eqref{eq:lmi} by $e_k^T$ and $e_k$ on the left and right, respectively. Then, observe that \eqref{eq:M_S} implies
\begin{align}
\label{eq:sigma_M_S}
    \sigma e_k^T M e_k \geq 0,
\end{align}
while Lemma~\ref{thm:iqc} implies
\begin{align}
\label{eq:lambda_M}
    \lambda e_k^T M_3 e_k \geq 0,
\end{align}
and therefore,
\begin{align}
    e_k^T\left(M_P + a\rho^2 M_1 + a(1 - \rho^2)M_2\right)e_k \leq 0.
\end{align}
Letting
\begin{align*}
    M_{P_k} &\coloneqq \left[\begin{array}{cc}
        \hat{A}^T P_{k+1} \hat{A} - P_k & \hat{A}^T P_{k+1} \hat{B} \\
        \hat{B}^T P_{k+1} \hat{A} & \hat{B}^T P_{k+1} \hat{B}
    \end{array}\right],
\end{align*}
multiply the previous inequality by $\rho^{-2k-2}$ to obtain
\begin{align}
\label{eq:V_nonincrease_bound}
    e_k^T\left(M_{P_k} + a_k M_1 + (a_{k+1} - a_k)M_2\right)e_k \leq 0.
\end{align}
We will use \eqref{eq:V_nonincrease_bound} momentarily.

Next, from \cite[Lemma 4.1]{fazlyab-2018}, we have
\begin{align}
    \phi(\xi_{k+1}) - \phi(\xi_k) \leq e_k^T M_1 e_k, \label{eq:M1_bound} \\
    \phi(\xi_{k+1}) - \phi^* \leq e_k^T M_2 e_k. \label{eq:M2_bound} 
\end{align}
Multiply \eqref{eq:M1_bound} by $a_k$, multiply \eqref{eq:M2_bound} by $(a_{k+1} - a_k)$, and then add the resulting inequalities to obtain
\begin{align}
    &a_{k+1}(\phi(\xi_{k+1}) - \phi^*) - a_k(\phi(\xi_k) - \phi^*) \nonumber \\
    &\quad \leq e_k^T\left(a_k M_1 + (a_{k+1} - a_k)M_2\right)e_k.
\end{align}
Combine this inequality with the fact that
\begin{align*}
    e_k^T M_{P_k} e_k &= (x_{k+1} - x^*)^T P_{k+1} (x_{k+1} - x^*) \nonumber \\
    &\quad - (x_k - x^*)^T P_k (x_k - x^*)
\end{align*}
to obtain
\begin{align}
    &V_{k+1}(x_{k+1}) - V_k(x_k) \nonumber \\ 
    &\quad \leq e_k^T\left(M_{P_k} + a_k M_1 + (a_{k+1} - a_k)M_2\right)e_k.
\end{align}
Combining this inequality with \eqref{eq:V_nonincrease_bound}, we have shown that \eqref{eq:V_nonincrease} holds for all $k$ such that $x_k \in S$.

Finally, for any $k$ such that $x_k \in S_R$, we may show \eqref{eq:V_nonincrease} using the same steps above, replacing $(M_P, M_1, M_2, M_3)$ with $(M_{P,R}, M_{1,R}, M_{2,R}, M_{3,R})$ and replacing \eqref{eq:sigma_M_S} with
\begin{align*}
    -\sigma_R e_k^T M e_k \geq 0,
\end{align*}
which follows from \eqref{eq:M_S_R}. Because $S \cup S_R = \Re^{2n}$, we have shown that \eqref{eq:V_nonincrease} holds for all $k$.
\end{proof}

Theorem~\ref{thm:lmi} is a generalization of \cite[Thm 3.2]{fazlyab-2018}, extending the class of systems from those of the form \eqref{eq:lure} to those of the form \eqref{eq:lure_hybrid}.

\section{Numerical Results}
\label{sec:numerical}

\subsection{LMI solutions}
\label{sec:numerical_lmi}

We now demonstrate how Nesterov's method is impacted by uncertainty about the problem parameters $(\mu, L)$ and how HHB-Nes \eqref{eq:hhbm_nesterov} shows some promise for mitigating these effects. For simplicity, we use the notation $\beta$ in this section for both the $\beta$ parameter of Nesterov's method as well as the $\overline{\beta}$ parameter of HHB-Nes. We are interested in a setting in which the stepsize parameter $h$ and momentum parameter $\beta$ deviate significantly from the optimal values $(h^*, \beta^*)$, by which we mean the values dependent on $(\mu, L)$ that have been shown in, e.g., \cite[Proposition 12]{lessard-2016}, to be optimal with respect to the class of $\mu$-strongly convex objectives with $L$-Lipschitz gradient. Specifically, for both Nesterov's method and HHB-Nes, we set $h = 1/(2L)$ and $\beta = 1 - 0.1\sqrt{h}$, which makes $h$ an underestimate of $h^*$ and makes $\beta$ close to $1$ but still increasing with $L$.

To obtain a value of $\rho$ from the LMI \eqref{eq:lmi_system}, we perform a bisection search on $\rho$, solving the resulting LMI for each fixed value of $\rho \in [0, 1]$. From the discussion in \cite[Sec. 4.2]{lessard-2016}, the structure of the system matrices in \eqref{eq:lure_nesterov} ensures that the LMI \eqref{eq:lmi_system} holds with $n \geq 1$ being the dimension of $\text{dom }\phi$ if and only if it holds for $n=1$. So, we attempt to solve the LMI only for $n=1$ here. For each fixed $\rho$, the LMI is solved by SeDuMi $1.3$ in Matlab $2020a$. The resulting values of $\rho$ for $\mu = 1$ and $L \in [1, 100]$ are shown by the top two curves in Figure~\ref{fig:lmi_nesterov} in tuning $(h, \beta)$ causes $\rho$ to deteriorate significantly for Nesterov's method, while HHB-Nes mitigates these impacts. For Nesterov's method, we use the LMI in \cite[Thm 3.2]{fazlyab-2018} of which Theorem~\ref{thm:lmi} is an extension. The bottom two curves in Figure~\ref{fig:lmi_nesterov_hihbm} depict the case $(h, \beta) = (h^*, \beta^*)$, showing that, when the algorithmic parameters are tuned with perfect knowledge of $(\mu, L)$, HHBM does not necessarily preserve the scaling behavior of $\rho$ with respect to $L/\mu$, which is a desirable property of Nesterov's method \cite[Sec. 4.5]{lessard-2016}. The experiment is repeated for HiHB-Nes with $\underline{\beta} = 1 - \sqrt{h}$ in Figure~\ref{fig:lmi_nesterov_hihbm}, which shows that HiHB-Nes trades off between the behaviors of HHB-Nes and Nesterov's method.

We pursued similar experiments for HHB-Pol, in which the LMI \eqref{eq:lmi_system} was able to show that, for various choices (and sequences of choices) of algorithmic parameters, HHB-Pol achieves nearly the same convergence rate as Polyak's method across a variety of values of $L/\mu$. In particular, setting $\mu = 1$ and using parameters that are (locally) optimal for the strongly convex setting \cite[Sec. 4.2]{lessard-2016}, HHB-Pol achieves the same convergence rate as Polyak's method up to the value of $L$ at which rates can no longer be guaranteed for Polyak's method, reproducing the curve in \cite[Figure 5]{lessard-2016} labelled ``LMI (sector)''. These results suggest that HHB-Pol at least preserves the rates achievable by Polyak's method, even if it does not improve on those rates for any particular value of $L/\mu$.

The continuous-time LMI \eqref{eq:lmi_hhb} is difficult to solve (performing a bisection search for $\alpha \in [10^{-6}, 100]$) unless the $B = [0 \; -I_n]^T$ matrix is replaced by $[-I_n \; -I_n]^T$, in which case the LMI is feasible for a restrictive range of values for $L/\mu$, and the rates are difficult to compare with those of the heavy-ball differential equation. We leave it to future research to determine the permissible modifications to $(A, B, C, M_{\varepsilon})$ that could play a role in improving the feasibility of LMI conditions such as \eqref{eq:lmi_hhb}.

\begin{figure}
     \centering
     \begin{subfigure}[b]{0.45\textwidth}
         \centering
         \includegraphics[width=\linewidth]{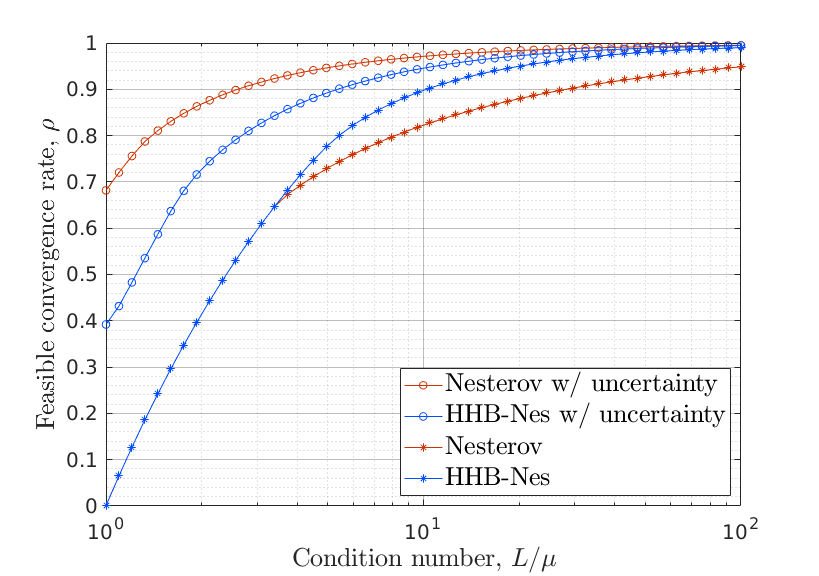}
         \caption{$\underline{\beta} = 0$}
         \label{fig:lmi_nesterov}
     \end{subfigure}
     \hfill
     \begin{subfigure}[b]{0.45\textwidth}
         \centering
         \includegraphics[width=\linewidth]{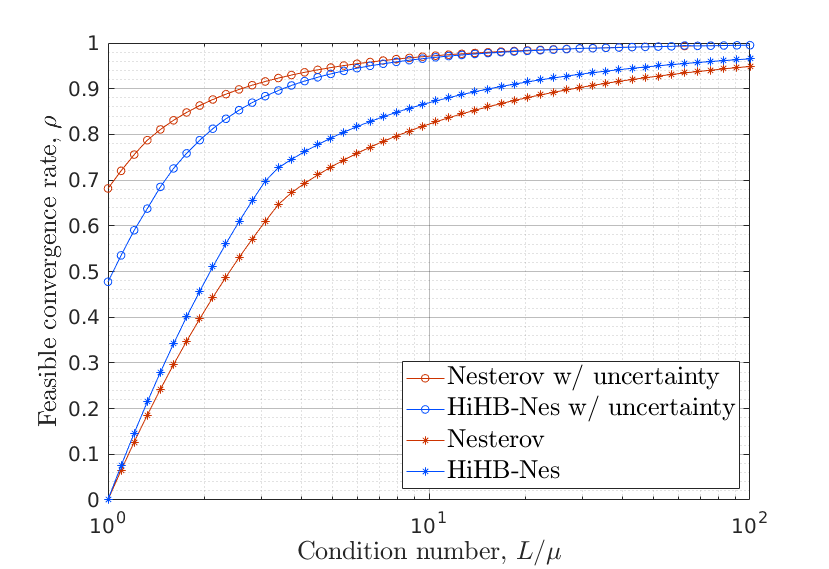}
         \caption{$\underline{\beta} = 1 - \sqrt{h}$}
         \label{fig:lmi_nesterov_hihbm}
     \end{subfigure}
        \caption{Values of $\rho$ obtained from Thm.~\ref{thm:lmi} for HHB-Nes and HiHB-Nes and from \cite[Thm 3.2]{fazlyab-2018} for Nesterov's method. The circled curves indicate uncertainty about $(\mu, L)$ in tuning $(h, \beta)$, as described in Sec.~\ref{sec:numerical_lmi}. The starred curves indicate that $h$ and $\beta$ are tuned with perfect knowledge of $(\mu, L)$, using \cite[Proposition 12]{lessard-2016}.}
        \label{fig:lmi}
\end{figure}

\subsection{Example: strongly convex quadratic objective}

In this section, we show examples of how HHBM can prevent oscillations from appearing in the trajectory of $\phi(q)$ that are caused by having a poorly tuned pair $(h, \beta)$. In particular, we consider a problem with strongly convex quadratic objective $\phi(q) \coloneqq \frac{1}{2}q^T Q q + b^T q$, with $L/\mu = 10^3$, and with $Q$ and $b$ randomly generated as described at the end of this subsection. As in the previous section, we intentionally use a smaller value of $h$ than recommended in theory ($h = 10^{-4}$), forcing $\beta = 1 - \sqrt{h}K$ to take values very close to $1$. As a consequence, it is more intuitive to discuss the parameters of the algorithms in terms of $K$ rather than $\beta$, which we do in Figure~\ref{fig:phi_dt}, where $K = 1.97$ roughly corresponds to the value that yields the fastest convergence rate for Polyak's method and Nesterov's method for the given $h$ and given $(Q, b)$ (determined experimentally). Figure~\ref{fig:phi_dt} shows that the convergence rates of both Polyak's method and Nesterov's method deteriorate significantly when $K$ underestimates the desirable value of $1.97$. Furthermore, due to the large $L/\mu$, the only way to remove the oscillations from the trajectory of Polyak's method is to increase $K$ to the point at which the asymptotic convergence rate is significantly slower than seen in Figure~\ref{fig:phi_dt_polyak}. In contrast, HHBM achieves the same asymptotic rates as the other two methods when $K = 1.97$, while it exhibits both faster asymptotic rates and fewer oscillations when $K$ underestimates the desirable value. 

We do not include HiHBM in Figure~\ref{fig:phi_dt} because the chosen stepsize is sufficiently small that the performance of HiHBM can be made to resemble that of HHBM very closely by choosing $\overline{K}$ sufficiently large. The appeal of HiHBM will instead be conveyed in the next section.

It is important to note that the advantages of HHBM and HiHBM are specific to the situation simulated above, in which the optimal algorithmic parameters are unavailable, especially when $h$ is too small and $\beta$ is too large. In our experiments, we have found situations where Polyak's method and Nesterov's method do not exhibit oscillations when using their respective optimally tuned parameters, and in these situations, our proposed algorithms essentially achieve the same asymptotic convergence rates (and lack of oscillations) as their classic counterparts. These observations are compatible with our LMI computations of the previous section, which suggested that our algorithms do not improve on the rates of their classic counterparts when optimal algorithmic parameters are available.

We use the following approach to generate a random matrix with a specific condition number $L/\mu$. In fact, we only consider $\mu = 1$. First, generate a random $n \times n$ matrix. Then, take the singular value decomposition $USV^T$, and replace $S$ with $\hat S$, where $\hat S$ has diagonal entries $\sigma_i$ for $i \in \{1, \ldots, n\}$ satisfying $\sigma_{\min} = 1$ and $\sigma_{\max} = \sqrt{L}$, with each of the remaining $n-2$ diagonal entries being uniformly distributed on $[1, \sqrt{L}]$. Let $\hat Q = U\hat SV^T$, and take $Q \coloneqq \hat Q \hat Q^T$ to be the matrix that defines $\phi$. To generate $b$, we take each of its entries to be uniformly distributed on $[-100, 100]$. Initial conditions are randomly generated in the same way as $b$. We have verified that the behaviors in Figure~\ref{fig:phi_dt} persist across several random trials.

\begin{figure}
     \centering
     \begin{subfigure}[b]{0.45\textwidth}
         \centering
         \includegraphics[width=\linewidth]{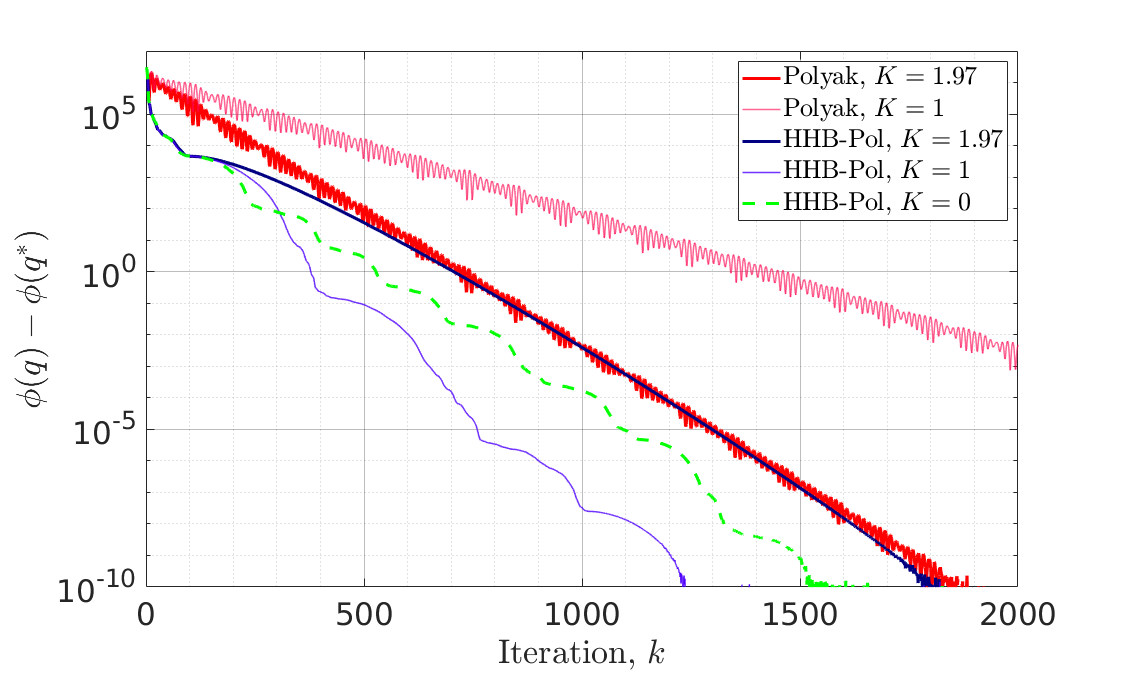}
         \caption{}
         \label{fig:phi_dt_polyak}
     \end{subfigure}
     \hfill
     \begin{subfigure}[b]{0.45\textwidth}
         \centering
         \includegraphics[width=\linewidth]{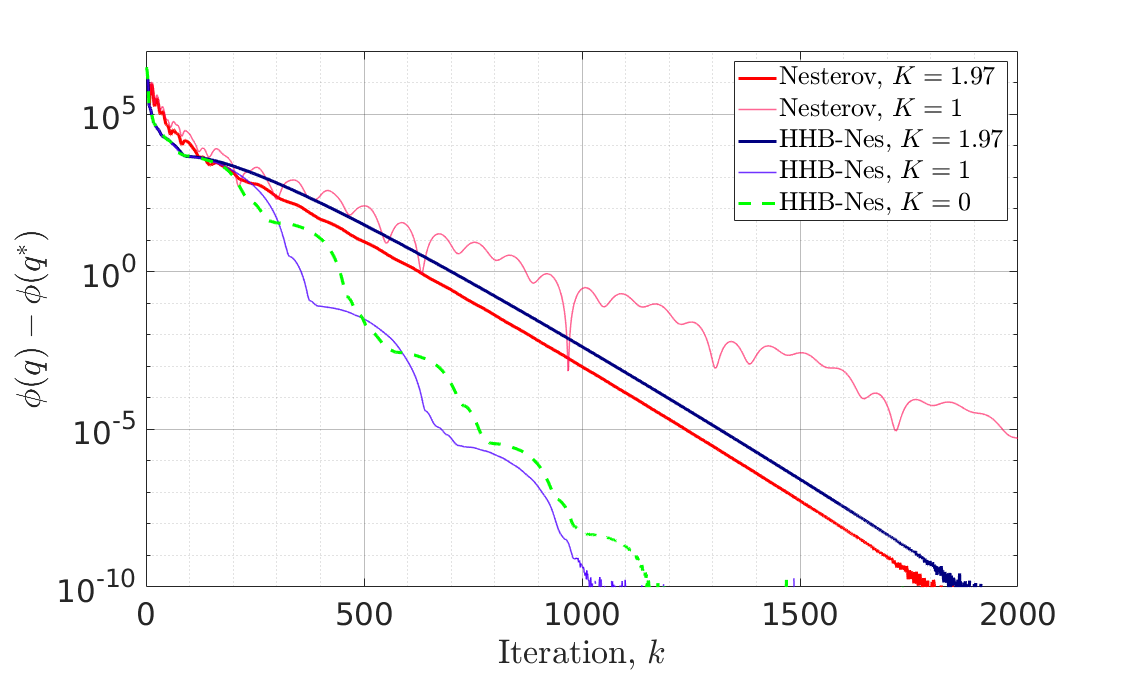}
         \caption{}
         \label{fig:phi_dt_nesterov}
     \end{subfigure}
        \caption{The value of $\phi(q) - \phi(q^*)$ versus iteration $k$, where $\phi$ is a strongly convex quadratic objective, and $q^*$ is the minimizer of $\phi$ computed by Matlab's ``quadprog'' function.}
        \label{fig:phi_dt}
\end{figure}

\subsection{Example: logistic regression}
\label{sec:lr}
In this section, we show that our proposed algorithms may have advantages even when applied to problems that are more general than those considered in our analyses. We consider a problem of logistic regression, common in statistical data analysis and machine learning \cite[Ch. 4]{hastie-2001}. Figure \ref{fig:lr} compares convergence rates for a problem of logistic regression with dataset $(\Theta, b)$, where $\Theta \in \Re^{n \times m}$ has columns denoted $\Theta_i$ and entries randomly drawn with standard normal distribution, and each component $b_i$ of $b$ is uniformly randomly drawn from $\{-1, 1\}$. The dataset has $m = 1000$ observations. The objective is
\begin{align}
\label{eq:lr_objective}
\phi(q) &= \sum_{i = 1}^{m} \log\left( 1 + \exp\left(-b_i \Theta_i^T q\right)\right).
\end{align}
The objective is convex but not strictly convex. However, on any compact set, it satisfies the $\mu$-PL condition \cite[Eq. 3]{karimi-2016} for some constant $\mu$ (see \cite[Sec. 2.3]{karimi-2016} for a discussion).

The stepsizes of gradient descent, Nesterov's method, and Polyak's method were tuned via bisection search. For Polyak's method, the best stepsize was chosen from a broad range of values, with the value of $\beta$ being tuned via bisection search for each stepsize considered. For Nesterov's method, we use the standard sequence of values for $\beta$ intended for the class of general convex objectives \cite[Sec. 2.2]{nesterov-2014}, namely the sequence $\alpha_k(1 - \alpha_k)/\left[\alpha_k^2 + \alpha_{k+1}\right]$ where $\alpha_k$ satisfies $\alpha_{k+1}^2 = (1 - \alpha_{k+1})\alpha_k^2$ (here, the initialization of $\beta$ had negligible effect). For HiHBM, we set $\underline{K} = 0$, while the stepsize and $\overline{K}$ were tuned in the same way that the stepsize and $\beta$ were tuned for HBM (which resulted in the same $\epsilon$ found for HBM). For HHBM, the stepsize and $\underline{K}$ were tuned in the same way that the stepsize and $\overline{K}$ were tuned for HiHBM, resulting in the same stepsize found for Nesterov's method and $\underline{K} = 0$.

% The stepsizes of gradient descent, Nesterov's method, and Polyak's method were tuned via bisection search. For Polyak's method, the best stepsize was chosen from a broad range of values, with the value of $\beta$ being tuned via bisection search for each stepsize considered. For Nesterov's method, we use the standard sequence of values for $\beta$ intended for the class of general convex objectives \cite[Sec. 2.2]{nesterov-2014}, namely the sequence $\alpha_k(1 - \alpha_k)/\left[\alpha_k^2 + \alpha_{k+1}\right]$ where $\alpha_k$ satisfies $\alpha_{k+1}^2 = (1 - \alpha_{k+1})\alpha_k^2$ (here, the initialization of $\beta$ had negligible effect). For HiHBM, we set $\underline{K} = 0$, while the stepsize and $\overline{K}$ were tuned in the same way that the stepsize and $\beta$ were tuned for HBM (which resulted in the same $\epsilon$ found for HBM). For HHBM, the stepsize and $\underline{K}$ were tuned in the same way that the stepsize and $\overline{K}$ were tuned for HiHBM, resulting in the same stepsize found for Nesterov's method and $\underline{K} = 0$.

\begin{figure}
    \centering
    \includegraphics[width=1\linewidth]{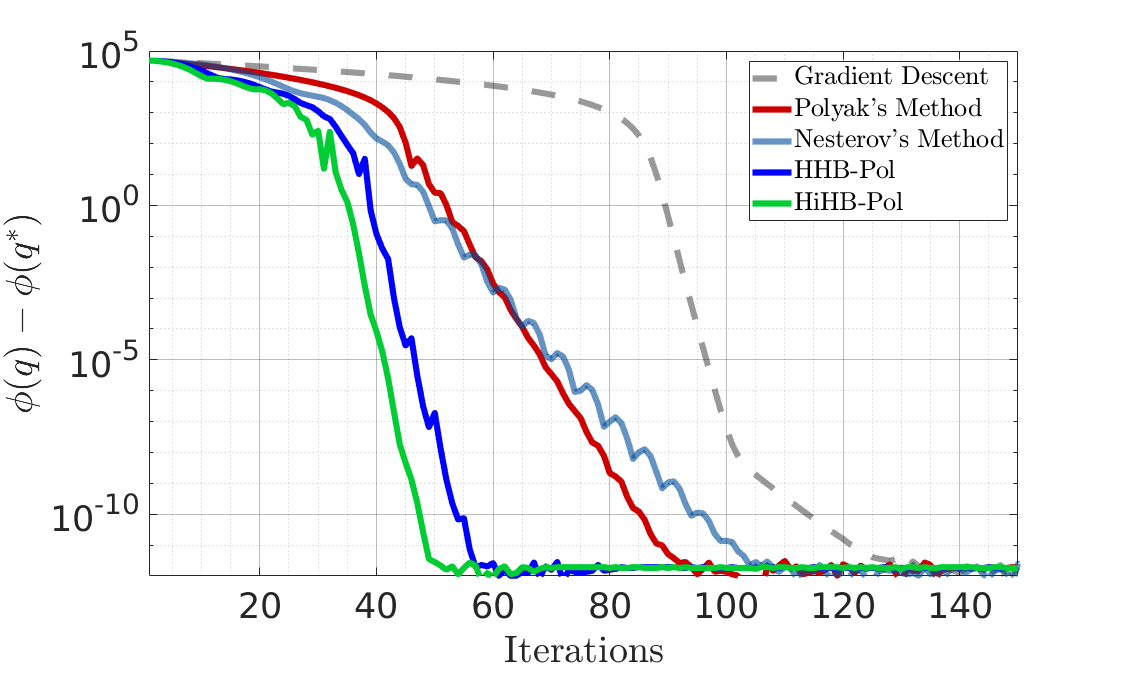}
    \caption{The value of $\phi(q) - \phi(q^*)$ versus iteration $k$, where $\phi$ is the logistic regression objective \eqref{eq:lr_objective}, and $q^*$ is the minimizer of $\phi$ computed by gradient descent.}
    \label{fig:lr}
\end{figure}

\ifx
\begin{figure}
     \centering
     \begin{subfigure}[b]{0.45\textwidth}
         \centering
         \includegraphics[width=\linewidth]{lmi_nesterov.png}
         \caption{$\underline{\beta} = 0$}
         \label{fig:lmi_nesterov}
     \end{subfigure}
     \hfill
     \begin{subfigure}[b]{0.45\textwidth}
         \centering
         \includegraphics[width=\linewidth]{lmi_nesterov_hihbm.png}
         \caption{$\underline{\beta} = 1 - \sqrt{h}$}
         \label{fig:lmi_nesterov_hihbm}
     \end{subfigure}
        \caption{Values of $\rho$ obtained from Thm.~\ref{thm:lmi} for HHB-Nes and HiHB-Nes and from \cite[Thm 3.2]{fazlyab-2018} for Nesterov's method. The circled curves indicate uncertainty about $(\mu, L)$ in tuning $(h, \beta)$, as described in Sec.~\ref{sec:numerical_lmi}. The starred curves indicate that $h$ and $\beta$ are tuned with perfect knowledge of $(\mu, L)$, using \cite[Proposition 12]{lessard-2016}.}
        \label{fig:lmi}
\end{figure}

\begin{figure}[t!]
     \centering
     \begin{subfigure}[b]{0.24\textwidth}
         \centering
         \includegraphics[width=\linewidth]{lmi_nesterov.png}
         \caption{$\underline{\beta} = 0$}
         \label{fig:lmi_nesterov}
     \end{subfigure}%
     ~
     \begin{subfigure}[b]{0.24\textwidth}
         \centering
         \includegraphics[width=\linewidth]{lmi_nesterov_hihbm.png}
         \caption{$\underline{\beta} = 1 - \sqrt{h}$}
         \label{fig:lmi_nesterov_hihbm}
     \end{subfigure}
        \caption{Values of $\rho$ obtained from Thm.~\ref{thm:lmi} for HHB-Nes and HiHB-Nes and from \cite[Thm 3.2]{fazlyab-2018} for Nesterov's method. The circled curves indicate uncertainty about $(\mu, L)$ in tuning $(h, \beta)$, as described in Sec.~\ref{sec:numerical_lmi}. The starred curves indicate that $h$ and $\beta$ are tuned with perfect knowledge of $(\mu, L)$, using \cite[Proposition 12]{lessard-2016}.}
        \label{fig:lmi}
\end{figure}
\fi

% \setlength{\bibsep}{0pt plus 0.3ex}
% \bibliographystyle{abbrv}
%{\normalsize
\bibliographystyle{IEEEtran}
\bibliography{main.bib}
\end{document}